\newtheorem{theorem}{Theorem}
\newtheorem{proposition}{Proposition}
\newcommand{\bbC}{{\mathbb C}}
\newcommand{\bbP}{{\mathbb P}}
\newcommand{\cC}{{\mathcal C}}
\newcommand{\cP}{{\mathcal P}}
\newcommand{\cQ}{{\mathcal Q}}
\title{Invariant volume form for 3D QRT maps}
\author{Jaume Alonso \and Yuri B.\ Suris}
\begin{document}

\maketitle

%
%
%

\begin{abstract}
Recently, we proposed a three-dimensional generalization of QRT maps. These novel maps can be associated with pairs of pencils of quadrics in $\mathbb P^3$. By construction, these maps have two rational integrals (parameters of both pencils). In the present paper, we find an invariant volume form for these maps, thus finally establishing their integrability.
\end{abstract}

\section{Introduction}

The so called QRT maps belong to the most celebrated examples of two-dimensional discrete integrable systems \cite{QRT1, QRT2, QRT book}. Recently, we introduced their three-dimensional generalization \cite{ASW}. The 3D QRT maps turned out to be instrumental for finding novel integrable Kahan-Hirota-Kimura type discretizations \cite{ASW}, as well as for a novel approach to discrete Painlev\'e equations \cite{ASW1, AS2}, but several inherent queries about these maps remained unanswered to this day. In the present paper, we resolve one of such queries.

By construction (to be quickly recalled below), any 2D QRT map has one integral of motion, while any 3D QRT map has two integrals of motion. This is often interpreted as {\em integrability} of these maps, however such a conclusion would be premature. Indeed, established notions of integrability require more than just the existence of a certain number of integrals of motion.

For a symplectic map on a symplectic manifold of (necessarily  even) dimension $2n$, the {\em Liouville-Arnold integrability} requires the existence of $n$ functionally independent integrals with pairwise vanishing Poisson brackets. For $n=1$, the latter requirement becomes void, as only one integral of motion is needed, but one still needs the symplecticity of the map, which in this case is equivalent to the existence of an invariant non-degenerate two-form. As a matter of fact, QRT maps do possess invariant two-forms, but this fact is less well-known than the existence of an integral of motion.  This has been established in the PhD thesis by John Roberts (``R'' in QRT) \cite{R}, while the first published proof is in \cite{IR}. The main benefit of the Liouville-Arnold integrabilty in dimension 2 is the statement that on any compact invariant curve (level set of an integral of motion), the dynamics are linear in a suitable affine structure.

Clearly, the notion of Liouville-Arnold integrability is not applicable to maps in odd dimensions. The following can be proposed as its replacement for dimension 3: a three-dimensional map is called integrable, if it possesses an invariant volume form and two functionally independent integrals $I_1$, $I_2$. Indeed, in this case the map can be restricted to level sets of $I_1$, and such a restriction will possess an integral of motion induced by $I_2$, and according to \cite{BHQ}, an invariant two-form. Thus, the dynamics on any compact invariant curve (common level set of $I_1$, $I_2$) are again linear in a suitable affine structure.

Thus, to declare a three-dimensional map with two integrals of motion \emph{integrable}, one needs to establish that it possesses an invariant volume form. Doing this for 3D QRT maps is exactly the goal of this paper.

It is organised as follows. In sections \ref{sect 2D QRT}, \ref{sect 3D QRT} we give a quick introduction into the geometric construction and main properties of two-dimensional QRT maps and their three-dimensional generalizations, respectively. In section \ref{sect adapted coords}, we describe the so called pencil-adapted coordinates for pencils of quadrics in $\bbP^3$, which turn out to be the main technical tool for our purpose here. In section \ref{sect main th}, we formulate the main theorem about the invariant volume form for 3D QRT maps, and discuss how its apparent lack of symmetry can be remedied. In section \ref{sect main proof}, the idea of the proof is presented, based on an important technical result (Proposition \ref{prop Q infty}). The proof of the latter is relegated to section \ref{sect proof prop 3}, after the statement is illustrated with some examples in section \ref{sect examples}. The paper is concluded by a short outlook in section \ref{sect conclusions}.

\section{2D QRT maps}
\label{sect 2D QRT}

To quickly introduce QRT maps, consider a \emph{pencil of biquadratic curves}
$$
\cC_\mu=\Big\{(x,y)\in \bbC^2: C_\mu(x,y):=C_0(x,y)-\mu C_\infty(x,y)=0\Big\},
$$
where $C_0,C_\infty$ are two polynomials of bidegree (2,2). The \emph{base set} $\mathcal B$ of the pencil is defined as the set of points through which all curves of the pencil pass or, equivalently, as the intersection $\{C_0(x,y)=0\}\cap\{C_\infty(x,y)=0\}$. 
Through any point $(x_0,y_0)\not\in \mathcal B$, there passes exactly one curve of the pencil, defined by $\mu=\mu(x_0,y_0)=C_0(x_0,y_0)/C_\infty(x_0,y_0)$. Actually, we consider this pencil in a compactification $\bbP^1\times \bbP^1$ of $\bbC^2$. Then, $\mathcal B$ consists of eight base points, counted with multiplicity.

One defines the \emph{vertical switch} $i_1$ and the \emph{horizontal switch} $i_2$ as follows. For a given point $(x_0,y_0)\in\bbP^1\times \bbP^1 \setminus \mathcal B$, determine $\mu=\mu(x_0,y_0)$ as above. Then the vertical line $\{x=x_0\}$ intersects $\cC_\mu$ at exactly one further point $(x_0,y_1)$ which is defined to be $i_1(x_0,y_0)$; similarly, the horizontal line $\{y=y_0\}$ intersects $\cC_\mu$ at exactly one further point $(x_1,y_0)$ which is defined to be $i_2(x_0,y_0)$. The QRT map is defined as
$$
f=i_1\circ i_2.
$$
Each of the maps $i_1$, $i_2$ is a birational involution on $\bbP^1\times \bbP^1$ with indeterminacy set $\mathcal B$. Likewise, the QRT map $f$ is a (dynamically nontrivial) birational map on $\bbP^1\times \bbP^1$, having $\mu(x,y)=C_0(x,y)/C_\infty(x,y)$ as an integral of motion. A generic fiber $C_\mu$ is an elliptic curve, and $f$ acts on it as a shift with respect to the corresponding addition law. 

\begin{proposition}\label{2D QRT inv measure}
The map $f$ has an invariant area form $dx\wedge dy/C_{\infty}(x,y)$ with the density $C_{\infty}(x,y)$. This can be expressed as
$$
\det\frac{\partial(\widetilde x,\widetilde y)}{\partial(x,y)}=\frac{C_{\infty}(\widetilde x,\widetilde y)}{C_{\infty}(x,y)},
$$
where we denote $(\widetilde x,\widetilde y)=f(x,y)$.
\end{proposition}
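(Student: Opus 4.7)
The plan is to decompose the QRT map as $f = i_1 \circ i_2$ and to show that each of the two involutions \emph{anti-preserves} the form $\omega = dx \wedge dy / C_\infty(x,y)$, i.e.\ $i_j^* \omega = -\omega$ for $j = 1, 2$. Composing two sign reversals, $f^* \omega = i_2^* i_1^* \omega = \omega$, which in coordinates is exactly the claimed Jacobian identity.

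To establish $i_1^* \omega = -\omega$, write $(\widetilde x, \widetilde y) = i_1(x,y) = (x, y_1)$; since $\widetilde x = x$, only $\partial y_1/\partial y$ is needed. The value $y_1$ is characterized implicitly by $C_\mu(x, y_1) = 0$, where $\mu = \mu(x,y) = C_0(x,y)/C_\infty(x,y)$. Differentiating this relation with respect to $y$ (with $x$ fixed) gives
$$
\partial_y C_\mu(x, y_1) \cdot \frac{\partial y_1}{\partial y} \,=\, \frac{\partial \mu}{\partial y}\, C_\infty(x, y_1).
$$
Now the defining identity $C_0 = \mu C_\infty$, valid at $(x,y)$, simplifies $\partial \mu/\partial y$ to $\partial_y C_\mu(x,y)/C_\infty(x,y)$. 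The crucial algebraic input is the elementary fact that for a quadratic $p(y) = a(y - y_0)(y - y_1)$ one has $p'(y_0) = -p'(y_1)$; applied to $p(y) = C_\mu(x, y)$ with $\mu$ fixed, this yields $\partial_y C_\mu(x, y_1) = -\partial_y C_\mu(x, y)$. Substituting back gives
$$
\frac{\partial y_1}{\partial y} \,=\, -\,\frac{C_\infty(\widetilde x, \widetilde y)}{C_\infty(x, y)},
$$
which is exactly $i_1^*\omega = -\omega$. The computation for $i_2$ is verbatim the same after interchanging the roles of $x$ and $y$, so $i_2^*\omega = -\omega$ as well.

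The only real subtlety, and the step I expect to be the single nontrivial point, is the interplay between two different uses of the equation $C_\mu = 0$: first at $(x, y)$, to reduce $\partial \mu/\partial y$ to $\partial_y C_\mu(x,y)/C_\infty(x,y)$ via $C_0 = \mu C_\infty$; and second at $(x, y_1)$, to invoke the Vieta-type relation between the derivatives at the two roots of the quadratic. These two ingredients conspire to produce precisely the minus sign that, after composition of two involutions, makes $\omega$ invariant. Everything is valid on the Zariski open set where $f$ is a local diffeomorphism (away from the base set $\mathcal B$ and from vertical/horizontal tangencies of the fibres), and the identity then extends wherever $f$ is defined by algebraicity.
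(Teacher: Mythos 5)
Your proposal is correct and follows essentially the same route as the paper's Appendix: decompose $f=i_1\circ i_2$, show each switch anti-preserves $dx\wedge dy/C_\infty$ by implicitly differentiating $C_\mu(x,y_1)=0$ and $C_\mu(x,y)=0$, using $\partial C_\mu/\partial\mu=-C_\infty$ together with the relation $\partial_y C_\mu(x,y_1)=-\partial_y C_\mu(x,y)$ at the two roots. The only cosmetic difference is that you obtain this last relation from the Vieta-type fact $p'(y_0)=-p'(y_1)$ for a quadratic, while the paper derives it by expanding $C_\mu(x_0,y_1)=C_\mu(x_0,y_0)$ in the biquadratic coefficients.
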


{\bf Remark.} In the statement of Proposition \ref{2D QRT inv measure}, one can replace $C_\infty$ by $C_{\mu_0}$ for any fixed $\mu_0\in\bbP^1$. Indeed, the density of an invariant area form is defined up to the multiplication with an integral of motion, and the quotient $C_{\mu_0}(x,y)/C_{\infty}(x,y)$ is an integral of motion of $f$ (for any fixed $\mu_0$). As pointed out above, this proposition is due to  \cite{R,IR}. For the reader's convenience, we reproduce its proof in the Appendix.

\section{3D QRT maps}
\label{sect 3D QRT}

As introduced in \cite{ASW}, a 3D QRT map is defined by two {\em pencils of quadrics} $\{\cQ_\lambda\}_{\lambda\in\bbP^1}$ and $\{\cP_\mu\}_{\mu\in\bbP^1}$ in $\bbP^3$,
\begin{eqnarray*}
\cQ_\lambda & = & \big\{X\in\bbP^3: Q_\lambda(X):=Q_0(X)-\lambda Q_\infty(X)=0\big\},\\
\cP_\mu & = & \big\{X\in\bbP^3: P_\mu(X):=P_0(X)-\mu P_\infty(X)=0\big\},
\end{eqnarray*} 
where $Q_0,Q_\infty,P_0,P_\infty$ are four homogeneous polynomials of $X=(X_1,X_2,X_3,X_4)$ of degree 2. The roles of the two pencils in the construction are different. 

We assume that the quadric $\cQ_0$ is non-degenerate; without loss of generality, we always take $Q_0(X)=X_1X_2-X_3X_4$. Denote by $M_0,M_\infty\in {\rm Sym}_{4\times 4}(\bbC)$ symmetric matrices of the quadratic forms $Q_0,Q_\infty$, and set $M_\lambda=M_0-\lambda M_\infty$, 
\begin{equation}\label{Delta}
\Delta(\lambda)=\det(M_\lambda)=\det(M_0-\lambda M_\infty)
\end{equation}
The characteristic polynomial $\Delta(\lambda)$ of the pencil $\{\cQ_\lambda\}$ is considered to have degree 4, so that $\lambda=\infty$ counts as a zero of $\Delta(\lambda)$ exactly if its leading coefficient (i.e., the coefficient by $\lambda^4$) vanishes. We set
\begin{equation}\label{Sing}
{\rm Sing}(\cQ)=\{\lambda\in\bbP^1: \cQ_\lambda\;\;{\rm is\;\; singular}\}=\{\lambda\in\bbP^1: \Delta(\lambda)=0\}.
\end{equation}
  
For any $X\in \bbP^3$ not belonging to the base set $\cQ_0\cap\cQ_\infty$ of the first pencil $\{\cQ_\lambda\}$, let $\lambda=\lambda(X)$ be defined as the unique value of the pencil parameter for which $X\in \cQ_\lambda$. Suppose that $\cQ_\lambda$ is non-degenerate. Then it admits two rulings such that any two lines of one ruling are skew and any line of one ruling intersects any line of the other ruling. Through each point $X\in \cQ_\lambda$ there pass two straight lines, one of each of the two rulings, let us call them $\ell_1(X)$ and $\ell_2(X)$. 

Suppose that $X$ does not belong to the base set of the second pencil $\{\cP_\mu\}$ either,  so that one can define $\mu=\mu(X)$ as the unique value of the pencil parameter for which $X\in \cP_\mu$. Now we are in the position to define the 3D QRT involutions $i_1, i_2$ on $\bbP^3$:
\begin{itemize}
\item $i_1(X)$ is the second intersection point of the generator $\ell_1(X)$ of $\cQ_\lambda$ with $\cP_\mu$, 
\item $i_2(X)$ is the second intersection point of the generator $\ell_2(X)$ of $\cQ_\lambda$ with $\cP_\mu$. 
\end{itemize}
The 3D QRT map on $\bbP^3$ is defined as $f=i_1\circ i_2$. By construction, $f$ leaves all quadrics of both pencils invariant. In other words, both rational functions $\lambda=Q_0(X)/Q_\infty(X)$ and $\mu=P_0(X)/P_\infty(X)$ are integrals of motion for $f$.

The main problem with this definition is that the dependence of generators $\ell_1(X)$, $\ell_2(X)$ on the point $X$ can be non-rational.  This issue will be clarified in the following section.

\section{Pencil-adapted coordinates}
\label{sect adapted coords}

One can classify pencils of quadrics in $\bbP^3$, containing at least one non-degenerate quadric, modulo complex congruence transformations, see, e.g., \cite{CA, OSG}. We briefly remind the main results, they can be also looked up in \cite{ASW1, AS2}. 

There are thirteen classes of pencils of quadrics, which can be distinguished either geometrically, by the type of the base curve of the pencil, or algebraically, by the structure of the system of elementary divisors of $M_\lambda$, encoded in the so called \emph{Segre symbols}. Elementary divisors are powers of $\lambda-\lambda_k$ for $\lambda_k\in\,{\rm Sing}(\cQ)$. The product of all elementary divisors is the characteristic polynomial $\Delta(\lambda)$.

Assuming, without loss of generality, that $Q_0(X)=X_1X_2-X_3X_4$, consider the problem of reducing $Q_\lambda(X)$ to the normal form $Q_0(X)$. Thus, we look for a linear projective change of variables $X=A_\lambda Y$ such that\begin{equation}\label{A norm}
Q_\lambda(A_\lambda Y)=Q_0(Y), \quad {\rm or}\quad A_\lambda^{\rm T}M_\lambda A_\lambda=M_0=\begin{pmatrix}  0 & 1 & 0 & 0 \\ 1 & 0 & 0 & 0  \\
0 & 0 & 0 & -1 \\ 0 & 0 & -1 & 0 \end{pmatrix}.
\end{equation}
A standard result from linear algebra says:
\begin{itemize}
\item[] \emph{The normalizing matrix $A_\lambda$ can be chosen as a rational function of $\lambda$ and of $\sqrt{\Delta(\lambda)}$. }
\end{itemize}
In particular, if $\Delta(\lambda)$ is a complete square (which happens for seven out of thirteen classes), then $A_\lambda$ is a rational function of $\lambda$. In the remaining six classes, it is more natural to consider this matrix as a meromorphic function on the Riemann surface $\mathcal R$ of $\sqrt{\Delta(\lambda)}$. This Riemann surface is a double cover of $\widehat{\mathbb C}$ branched at two or at four points. By the uniformization theorem, its universal cover is $\mathbb C$. We will denote the uniformizing variable $\nu\in\mathbb C$, so that the maps $\nu\mapsto \lambda$ and $\nu\mapsto \sqrt{\Delta(\lambda)}$ are holomorphic. In this case equation \eqref{A norm} becomes
\begin{equation}\label{A norm branched}
Q_{\lambda(\nu)}(A_{\nu} Y)=Q_0(Y), \quad {\rm or}\quad A_{\nu}^{\rm T}M_{\lambda(\nu)} A_{\nu}=M_0.
\end{equation}
For simplicity, we will sometimes slightly abuse notation, sticking to formula \eqref{A norm} also in the branched case.
\smallskip

The main application of the normalizing transformation $A_\lambda$ are so called \emph{pencil-adapted coordinates}. Namely, the formulas 
\begin{equation}\label{phi lambda}
\begin{pmatrix} X_1 \\ X_2 \\ X_3 \\ X_4 \end{pmatrix}=\phi_\lambda(x,y):=A_\lambda\begin{pmatrix} x \\ y \\ xy \\ 1 \end{pmatrix},
\end{equation} 
(in the rational case), resp. 
\begin{equation}\label{phi nu}
\begin{pmatrix} X_1 \\ X_2 \\ X_3 \\ X_4 \end{pmatrix}=\phi_\nu(x,y):=A_\nu\begin{pmatrix} x \\ y \\ xy \\ 1 \end{pmatrix},
\end{equation} 
(in the branched case) give a parametrization of $\cQ_\lambda$ (resp. of $\cQ_{\lambda(\nu)}$) by $(x,y)\in\bbP^1\times\bbP^1$, such that the generators $\ell_1$ of $\cQ_\lambda$ correspond to $x={\rm const}$, while generators $\ell_2$ correspond to $y={\rm const}$.  In the first case, generators $\ell_1(X)$, $\ell_2(X)$ are rational functions on $\mathbb P^3$, and  the 3D QRT maps are birational. In the second case, generators become well-defined rational functions on the variety $\mathcal X$ which is a branched double covering of $\bbP^3$, whose ramification locus is the union of the singular quadrics $\cQ_{\lambda_i}$, where $\lambda_i$ are the branch points of $\mathcal R$. Interchanging two sheets of the covering corresponds to interchanging two families of generators $\ell_1$, $\ell_2$. Likewise, the 3D QRT maps become single-valued on $\mathcal X$. Anyway, the preservation of a measure is a local property, which we consider and establish away from the degenerate quadrics. 
\smallskip

We will now illustrate the concepts of the normalizing transformation and the pencil-adapted coordinates by two examples, chosen among the thirteen classes of pencils in a pretty random way. For other classes, the reader can consult \cite{ASW1, AS2}. For the most general case ($\Delta(\lambda)$ with four simple roots, the base curve of the pencil being a non-degenerate spatial curve of degree 4 and of genus 1), the pencil-adapted coordinates will be given in a separate publication.
\smallskip

\emph{Example 1: pencil of quadrics through a twisted cubic and one of its tangent lines:}
\begin{equation}
\cQ_\lambda=\big\{X_1X_2-X_3X_4-\lambda(X_1^2-X_2X_4)=0\big\},  \quad 
M_\lambda=\begin{pmatrix}    -2\lambda & 1 & 0 & 0 \\ 1 & 0 & 0 & \lambda \\ 0 & 0 & 0 & -1 \\ 0 & \lambda & -1 & 0
\end{pmatrix}.
\end{equation}
Here $\Delta(\lambda)=1$, a complete square, with ${\rm Sing}(\cQ)=\{\infty\}$. The base curve of this pencil is:
\begin{equation}\label{case vi base curve}
\cQ_0\cap \cQ_\infty = \big\{[x:x^2:x^3:1]: x\in\bbP^1\big\}\cup\{X_1=X_4=0\} 
\end{equation} 
(the union of a twisted cubic and a tangent line to it). In this case, one finds:
$$
A_\lambda=\begin{pmatrix} 1 & 0 & 0 & 0 \\ \lambda & 1 & 0 & 0 \\ \lambda^2 & \lambda & 1 & 0 \\ 0 & 0 & 0 & 1
\end{pmatrix}.
$$
Parametrization of $\cQ_\lambda$ by the pencil-adapted coordinates: 
\begin{equation}
\phi_\lambda(x,y) =\begin{pmatrix} x \\ y+\lambda x\\  xy+\lambda y+\lambda^2 x \\ 1\end{pmatrix}.
 \end{equation}

\smallskip

\emph{Example 2: pencil of quadrics through a cuspidal quartic curve:}
\begin{equation}
\cQ_\lambda=\big\{X_1X_2-X_3X_4-\lambda Q_\infty(X)=0\big\}\;\;{\rm with}\;\;
Q_\infty(X)=\tfrac{1}{2}(X_1-X_2)^2-(X_1+X_2)X_4,
\end{equation}
$$
M_\lambda=\begin{pmatrix} -\lambda & 1+\lambda & 0 & \lambda \\ 1+\lambda & -\lambda & 0 & \lambda \\
0 & 0 & 0 & -1 \\ \lambda & \lambda & -1 & 0
\end{pmatrix}
$$
For this pencil $\Delta(\lambda)=2\lambda+1$, not a complete square, with ${\rm Sing}(\cQ)=\{-\tfrac{1}{2},\infty\}$. The base curve $\cQ_0\cap \cQ_\infty$ is a cuspidal space curve of degree 4, with the cusp at $[0:0:1:0]$. A uniformizing variable for $\sqrt{\Delta(\lambda)}$ can be chosen as
\begin{equation}
\nu=\sqrt{\Delta(\lambda)},\quad \lambda(\nu)=(\nu^2-1)/2.
\end{equation}
One finds:
$$
A_\nu=\begin{pmatrix} 
 \tfrac{1}{2\nu}(\nu+1) &  \tfrac{1}{2\nu}(\nu-1) & 0 & 0\\[0.2cm]
 \tfrac{1}{2\nu}(\nu-1) &  \tfrac{1}{2\nu}(\nu+1) & 0 & 0\\[0.2cm]
 \tfrac{1}{2}(\nu^2-1) &   \tfrac{1}{2}(\nu^2-1) & 1 & 0\\[0.2cm]
 0 & 0 & 0 & 1\end{pmatrix}.
 $$
Parametrization of $\cQ_{\lambda(\nu)}$ by the pencil-adapted coordinates: 
\begin{equation}
\phi_{\nu}(x,y)
 =\begin{pmatrix} \frac{1}{2\nu}\big((\nu+1)x+(\nu-1)y\big) \\[0.2cm] \frac{1}{2\nu}\big((\nu-1)x+(\nu+1)y\big)\\[0.2cm] 
  xy+\frac{\nu^2-1}{2}(x+y) \\[0.2cm]  1 \end{pmatrix}.
 \end{equation}

\section{Main theorem}
\label{sect main th}

\begin{theorem}\label{theorem main}
Consider a 3D QRT map $f$ defined by two pencils of quadrics $\cQ_\lambda$, $\cP_\mu$. On any affine part of $\mathbb P^3$, say $\{X_4\neq 0\}$, in affine coordinates 
$$
(x_1,x_2,x_3)=\Big(\frac{X_1}{X_4},\frac{X_2}{X_4},\frac{X_3}{X_4}\Big),
$$
the 3D QRT map preserves the volume form $dx_1\wedge dx_2\wedge dx_3/\rho(x_1,x_2,x_3)$ with the density
\begin{equation}\label{rho in aff coord}
\rho(x_1,x_2,x_3)=Q_{\infty}(x_1,x_2,x_3,1)P_{\infty}(x_1,x_2,x_3,1).
\end{equation}
Equivalently,
\begin{equation}\label{Jac dim 3 QRT}
\det\frac{\partial(\widetilde x_1,\widetilde x_2,\widetilde x_3)}{\partial(x_1,x_2,x_3)}=\frac{\rho(\widetilde x_1,\widetilde x_2,\widetilde x_3)}{\rho(x_1,x_2,x_3)},
\end{equation}
where we denote $[\widetilde x_1:\widetilde x_2:\widetilde x_3:1]=f([x_1:x_2:x_3:1])$.
\end{theorem}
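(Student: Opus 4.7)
The plan is to reduce the three-dimensional claim to the two-dimensional Proposition~\ref{2D QRT inv measure} via the pencil-adapted coordinates of Section~\ref{sect adapted coords}. Working in the affine chart $\{X_4\neq 0\}$, I introduce local coordinates $(x,y,\lambda)$ on $\bbP^3$ (or $(x,y,\nu)$ on the double cover $\mathcal X$ in the branched case) via the parametrization $\phi_\lambda$ composed with dehomogenization by $X_4$. In these coordinates the slices $\{\lambda=\text{const}\}$ are the quadrics $\cQ_\lambda$, the two rulings $\ell_1,\ell_2$ are the coordinate lines $\{x=\text{const}\}$ and $\{y=\text{const}\}$, and since each entry of $\phi_\lambda(x,y)$ is of bidegree $(1,1)$ in $(x,y)$, each intersection $\cP_\mu\cap\cQ_\lambda$ is cut out by the biquadratic $\tilde P_\mu(x,y;\lambda):=P_\mu(\phi_\lambda(x,y))$.

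It follows that the 3D QRT involutions $i_1,i_2$ restrict to each slice $\cQ_\lambda$ as precisely the vertical and horizontal switches of the 2D QRT construction associated with the pencil $\{\tilde P_\mu(\cdot,\cdot;\lambda)\}_\mu$. Proposition~\ref{2D QRT inv measure} then yields that $f=i_1\circ i_2$ preserves the area form $dx\wedge dy/\tilde P_\infty(x,y;\lambda)$ on each slice, and since $\lambda$ is an integral of $f$, the three-form
$$
\omega\;=\;\frac{dx\wedge dy\wedge d\lambda}{\tilde P_\infty(x,y;\lambda)}
$$
is $f$-invariant in the pencil-adapted coordinates. (In the branched case the same argument gives the $f$-invariance of $dx\wedge dy\wedge d\nu/\tilde P_\infty(x,y;\nu)$; since measure preservation is a local property away from the singular quadrics, descending this form to $\bbP^3$ is harmless.)

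To conclude, I pull $\omega$ back along $\phi_\lambda$ to the affine coordinates $(x_1,x_2,x_3)$ and compare with the target form. Careful bookkeeping of the factor $X_4(x,y,\lambda)$ coming from dehomogenization turns \eqref{Jac dim 3 QRT} into a single Jacobian identity, essentially of the shape
$$
\det\frac{\partial(x_1,x_2,x_3)}{\partial(x,y,\lambda)}\;\propto\; Q_\infty(x_1,x_2,x_3,1),
$$
which I expect is exactly the content of the announced Proposition~\ref{prop Q infty}. The main obstacle of the proof therefore lies here: establishing that the Jacobian of the pencil-adapted chart coincides, up to a simple prefactor, with the density $Q_\infty$ of the first pencil. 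One natural route is a uniform argument exploiting the defining relation $A_\lambda^{\rm T}M_\lambda A_\lambda=M_0$ together with Cramer's rule for $\det A_\lambda$ and $\partial_\lambda A_\lambda$; failing that, a case-by-case verification along the thirteen Segre classes (with particular care for $\infty\in {\rm Sing}(\cQ)$ and for the branched cases) should suffice.
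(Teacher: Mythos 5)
Your reduction is exactly the paper's route: pass to pencil-adapted coordinates $(x,y,\lambda)$, note that $f$ restricts to each $\cQ_\lambda$ as a 2D QRT map for the biquadratic pencil $C_{\lambda,\mu}(x,y)=P_\mu(\phi_\lambda(x,y))$, invoke Proposition~\ref{2D QRT inv measure} together with $\widetilde\lambda=\lambda$ to get invariance of $dx\wedge dy\wedge d\lambda/C_{\lambda,\infty}(x,y)$, and then convert to the affine chart, where the remaining issue is precisely the Jacobian identity of Proposition~\ref{prop Q infty} (including the $X_4^{-4}$ bookkeeping and the fact that the undetermined factor depends on $\lambda$ only, hence is an integral and harmless). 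All of this matches Section~\ref{sect main proof}.

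However, there is a genuine gap: you do not prove the identity $\det\partial(x_1,x_2,x_3)/\partial(x,y,\lambda)\simeq X_4^{-4}Q_\infty(\phi_\lambda(x,y))$, which is the actual crux of the theorem; you only name it and offer two tentative strategies. The first (``Cramer's rule for $\det A_\lambda$ and $\partial_\lambda A_\lambda$'') is not a workable argument as stated, and the second (case-by-case verification over the thirteen Segre classes) is not available within the paper's framework, since the pencil-adapted coordinates for the most general class are not even written down there (they are deferred to a separate publication), and in the branched classes one would have to redo the computation in the uniformizing variable $\nu$ each time. The paper's proof is a short uniform computation that your sketch misses: writing $X=A_\lambda Y$ with $Y=(x,y,xy,1)^{\rm T}$, the $4\times 4$ determinant $\det(\partial_x X,\partial_y X, X,\partial_\lambda X)$ reduces (up to $\det A_\lambda$) to the quadratic form $Y^{\rm T}M_0A_\lambda^{-1}\partial_\lambda(A_\lambda)Y = X^{\rm T}M_\lambda\partial_\lambda(A_\lambda)A_\lambda^{-1}X$, and symmetrizing and differentiating $M_\lambda=A_\lambda^{-\rm T}M_0A_\lambda^{-1}$ in $\lambda$ identifies this with $-X^{\rm T}\partial_\lambda(M_\lambda)X=Q_\infty(X)$. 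Without some argument of this kind (or an honest class-by-class verification), your proposal establishes only the reduction of Theorem~\ref{theorem main} to Proposition~\ref{prop Q infty}, not the theorem itself.
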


A couple of comments are in order here. First of all, in this statement one can replace $Q_\infty$, $P_\infty$ by $Q_{\lambda_0}$, $P_{\mu_0}$ with arbitrary $\lambda_0, \mu_0\in\bbP^1$, since any two functions $Q_{\lambda_0}/Q_{\infty}$ and $P_{\mu_0}/P_{\infty}$ are integrals of motion for $f$. Second, though the statement of the theorem appears dependent on the choice of the affine part of $\mathbb P^3$, it actually is not, due to the following observation.

\begin{proposition}\label{prop Jac in dim 3,4}
Consider a rational map on $\mathbb P^3$ given by $\widetilde X_i=\widetilde X_i(X)$, $i=1,\ldots,4$, where $\widetilde X_i(X)$ are homogeneous polynomials of degree $m$. On any affine part of $\mathbb P^3$, say $\{X_4\neq 0\}$, in affine coordinates 
$$
(x_1,x_2,x_3)=\Big(\frac{X_1}{X_4},\frac{X_2}{X_4},\frac{X_3}{X_4}\Big),
$$
we have:
\begin{equation}\label{Jac in dim 3,4}
\det\frac{\partial (\widetilde X_1,\widetilde X_2,\widetilde X_3,\widetilde X_4)}{\partial(X_1,X_2,X_3,X_4)}=
m\frac{\widetilde X_4^4}{X_4^4}\det\frac{\partial(\widetilde x_1,\widetilde x_2,\widetilde x_3)}{\partial(x_1,x_2,x_3)}
\end{equation}
\end{proposition}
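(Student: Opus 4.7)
The plan is to compute the $4\times 4$ Jacobian by factoring the substitution $X\mapsto \widetilde X$ through the intermediate set of coordinates $(x_1,x_2,x_3,X_4)$ on the source and $(\widetilde x_1,\widetilde x_2,\widetilde x_3,\widetilde X_4)$ on the target, and then exploit homogeneity in the only column where $X_4$ still appears genuinely.

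First I would set up the change of coordinates $X_i = x_i X_4$ for $i=1,2,3$ and record the two elementary Jacobians
\[
\det\frac{\partial(X_1,X_2,X_3,X_4)}{\partial(x_1,x_2,x_3,X_4)} = X_4^{3},
\qquad
\det\frac{\partial(\widetilde X_1,\widetilde X_2,\widetilde X_3,\widetilde X_4)}{\partial(\widetilde x_1,\widetilde x_2,\widetilde x_3,\widetilde X_4)} = \widetilde X_4^{\,3},
\]
both coming from a triangular matrix with diagonal $(X_4,X_4,X_4,1)$, resp.\ $(\widetilde X_4,\widetilde X_4,\widetilde X_4,1)$.

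Next I would use the homogeneity of degree $m$ to write
\[
\widetilde X_i(X) = X_4^{\,m}\,\widetilde X_i(x_1,x_2,x_3,1), \qquad i=1,\ldots,4.
\]
Hence $\widetilde x_i = \widetilde X_i/\widetilde X_4$ depends only on $(x_1,x_2,x_3)$, while $\widetilde X_4 = X_4^{\,m}\,\widetilde X_4(x_1,x_2,x_3,1)$. This makes the intermediate Jacobian block triangular:
\[
\frac{\partial(\widetilde x_1,\widetilde x_2,\widetilde x_3,\widetilde X_4)}{\partial(x_1,x_2,x_3,X_4)}
=\begin{pmatrix} \dfrac{\partial(\widetilde x_1,\widetilde x_2,\widetilde x_3)}{\partial(x_1,x_2,x_3)} & \mathbf{0} \\[0.3cm] * & \dfrac{\partial \widetilde X_4}{\partial X_4}\end{pmatrix},
\]
and the bottom-right entry is $m X_4^{\,m-1}\widetilde X_4(x_1,x_2,x_3,1) = m\widetilde X_4/X_4$ (this is where Euler's identity enters in disguise).

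Finally, chaining the three Jacobians together via
\[
\frac{\partial(\widetilde X_1,\ldots,\widetilde X_4)}{\partial(X_1,\ldots,X_4)}
=\frac{\partial(\widetilde X_1,\ldots,\widetilde X_4)}{\partial(\widetilde x_1,\widetilde x_2,\widetilde x_3,\widetilde X_4)}\cdot
\frac{\partial(\widetilde x_1,\widetilde x_2,\widetilde x_3,\widetilde X_4)}{\partial(x_1,x_2,x_3,X_4)}\cdot
\frac{\partial(x_1,x_2,x_3,X_4)}{\partial(X_1,\ldots,X_4)}
\]
and taking determinants produces $\widetilde X_4^{\,3}\cdot m\,\widetilde X_4/X_4 \cdot X_4^{-3}$ multiplying the $3\times 3$ determinant on the right of \eqref{Jac in dim 3,4}, which is precisely the claimed formula. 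No genuine obstacle is expected; the only point requiring care is to notice that the homogeneity of degree $m$ forces $\widetilde x_i$ to be independent of $X_4$, which is exactly what makes the intermediate Jacobian block triangular and what ultimately isolates the factor of $m$.
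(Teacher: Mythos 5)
Your proof is correct. Every step checks out: the two outer Jacobians are $X_4^3$ and $\widetilde X_4^{\,3}$ from the triangular matrices you describe; homogeneity gives $\widetilde X_i(X)=X_4^m\widetilde X_i(x_1,x_2,x_3,1)$, so $\widetilde x_i$ indeed depends only on $(x_1,x_2,x_3)$ and the middle Jacobian is block triangular with lower-right entry $\partial\widetilde X_4/\partial X_4=mX_4^{m-1}\widetilde X_4(x_1,x_2,x_3,1)=m\widetilde X_4/X_4$; and the composition of your three maps reproduces the polynomial map $X\mapsto\widetilde X(X)$ exactly on the chart $\{X_4\neq 0\}\cap\{\widetilde X_4\neq 0\}$, so the chain rule and multiplicativity of determinants yield $\widetilde X_4^{\,3}\cdot\frac{m\widetilde X_4}{X_4}\cdot X_4^{-3}=m\widetilde X_4^{\,4}/X_4^{\,4}$ times the $3\times 3$ Jacobian, as claimed. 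Your route differs from the paper's: the paper works on the single $4\times 4$ determinant directly, invoking Euler's theorem to rewrite the column of $\partial_4\widetilde X_i$ and then performing column operations to peel off the $3\times 3$ block, whereas you factor the map through the intermediate coordinates $(x_1,x_2,x_3,X_4)$ and $(\widetilde x_1,\widetilde x_2,\widetilde x_3,\widetilde X_4)$ and multiply three Jacobians. The two arguments encode the same homogeneity input (your $\partial\widetilde X_4/\partial X_4=m\widetilde X_4/X_4$ is Euler's identity in the scaled coordinates), but your factorization makes the origin of the factor $m$ and of the powers $\widetilde X_4^{\,4}/X_4^{\,4}$ transparent and avoids the explicit determinant manipulations, at the mild cost of having to observe that the composition of the three maps really is the original map and that one restricts to the locus where both $X_4$ and $\widetilde X_4$ are nonzero (which the paper's statement assumes anyway).
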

\begin{proof} The Jacobian on the left-hand side of equation \eqref{Jac in dim 3,4} can be transformed with the use of the Euler theorem for homogeneous polynomials (we write $\partial_i$ for $\partial/\partial X_i$, $i=1,\ldots,4$):
$$
\partial_4\widetilde X_i=-\frac{1}{X_4}\sum_{j=1}^3 X_j\partial_j \widetilde X_i+\frac{m\widetilde X_i}{X_4}, \quad i=1,2,3.
$$
Substitute this into the 4th column of the Jacobian and perform standard column operations. We come to 
$$
\frac{m}{X_4}\left|\begin{array}{cccc} 
\partial_1\widetilde X_1 & \partial_2\widetilde X_1 & \partial_3\widetilde X_1 & \widetilde X_1 \\
\partial_1\widetilde X_2 & \partial_2\widetilde X_2 & \partial_3\widetilde X_2 & \widetilde X_2 \\
\partial_1\widetilde X_3 & \partial_2\widetilde X_3 & \partial_3\widetilde X_3 & \widetilde X_3 \\
\partial_1\widetilde X_4 & \partial_2\widetilde X_4 & \partial_3\widetilde X_4 & \widetilde X_4 
\end{array}\right|.
$$
Divide the first three columns by $\widetilde X_4$ and subtract the 4th column multiplied by $(\partial_i\widetilde X_4)/\widetilde X_4^2$. The result reads
$$
\frac{m\widetilde X_4^3}{X_4}\left|\begin{array}{cccc} 
\frac{\partial_1\widetilde X_1}{\widetilde X_4}-\widetilde X_1\frac{\partial_1\widetilde X_4}{\widetilde X_4^2} & \frac{\partial_2\widetilde X_1}{\widetilde X_4}-\widetilde X_1\frac{\partial_2\widetilde X_4}{\widetilde X_4^2} & \frac{\partial_3\widetilde X_1}{\widetilde X_4}-\widetilde X_1\frac{\partial_3\widetilde X_4}{\widetilde X_4^2} & \widetilde X_1 
\\[0.2cm]
\frac{\partial_1\widetilde X_2}{\widetilde X_4}-\widetilde X_2\frac{\partial_1\widetilde X_4}{\widetilde X_4^2} & \frac{\partial_2\widetilde X_2}{\widetilde X_4}-\widetilde X_2\frac{\partial_2\widetilde X_4}{\widetilde X_4^2} & \frac{\partial_3\widetilde X_2}{\widetilde X_4}-\widetilde X_2\frac{\partial_3\widetilde X_4}{\widetilde X_4^2} & \widetilde X_1 
\\[0.2cm]
\frac{\partial_1\widetilde X_3}{\widetilde X_4}-\widetilde X_3\frac{\partial_1\widetilde X_4}{\widetilde X_4^2} & \frac{\partial_2\widetilde X_3}{\widetilde X_4}-\widetilde X_3\frac{\partial_2\widetilde X_4}{\widetilde X_4^2} & \frac{\partial_3\widetilde X_3}{\widetilde X_4}-\widetilde X_3\frac{\partial_3\widetilde X_4}{\widetilde X_4^2} & \widetilde X_1 \\
0 & 0 & 0 & \widetilde X_4 
\end{array}\right|,
$$
which equals
$$
\frac{m\widetilde X_4^4}{X_4}\left|\begin{array}{ccc} 
\partial_1\Big(\frac{\widetilde X_1}{\widetilde X_4}\Big) & \partial_2\Big(\frac{\widetilde X_1}{\widetilde X_4}\Big) & 
\partial_3\Big(\frac{\widetilde X_1}{\widetilde X_4}\Big) \\[0.2cm]
\partial_1\Big(\frac{\widetilde X_2}{\widetilde X_4}\Big) & \partial_2\Big(\frac{\widetilde X_2}{\widetilde X_4}\Big) & 
\partial_3\Big(\frac{\widetilde X_2}{\widetilde X_4}\Big) \\[0.2cm]
\partial_1\Big(\frac{\widetilde X_3}{\widetilde X_4}\Big) & \partial_2\Big(\frac{\widetilde X_3}{\widetilde X_4}\Big) & 
\partial_3\Big(\frac{\widetilde X_3}{\widetilde X_4}\Big) 
\end{array}\right| = \frac{m\widetilde X_4^4}{X_4}\left|\begin{array}{ccc} 
\partial_1\widetilde x_1 & \partial_2\widetilde x_1 &  \partial_3\widetilde x_1 \\[0.2cm]
\partial_1\widetilde x_2 & \partial_2\widetilde x_2 &  \partial_3\widetilde x_2 \\[0.2cm]
\partial_1\widetilde x_3 & \partial_2\widetilde x_3 &  \partial_3\widetilde x_3 
\end{array}\right| .
$$
Replacing $\partial_i=\partial/\partial X_i=X_4^{-1}\partial/\partial x_i$, we finish the proof. \end{proof}
\medskip

According to Proposition \ref{prop Jac in dim 3,4}, formula \eqref{Jac dim 3 QRT} takes the symmetric form
\begin{equation}\label{Jac dim 4 QRT}
\det\frac{\partial(\widetilde X_1,\widetilde X_2,\widetilde X_3,\widetilde X_4)}{\partial(X_1,X_2,X_3,X_4)}=
m\frac{Q_{\infty}(\widetilde X_1,\widetilde X_2,\widetilde X_3,\widetilde X_4)P_{\infty}(\widetilde X_1,\widetilde X_2,\widetilde X_3,\widetilde X_4)}{Q_{\infty}(X_1,X_2,X_3,X_4)P_{\infty}(X_1,X_2,X_3,X_4)},
\end{equation}
so that the statement of Theorem \ref{theorem main} does not depend on the choice of the affine chart.

\section{Proof of Theorem \ref{theorem main}}
\label{sect main proof}

For one particular class of pencils $\{\cQ_\lambda\}$, namely the pencils of quadrics through a skew quadrilateral (which can be normalized to $Q_\lambda(X)=X_1X_2-\lambda X_3X_4$), Theorem \ref{theorem main} was proved through a direct verification of formula \eqref{Jac dim 4 QRT} by symbolic computation in the master thesis \cite{H} supervised by the second author. Actually, it was this result that made us believe that Theorem \ref{theorem main} holds in general. However, a 
general proof by this method does not look feasible, because of a high complexity of 3D QRT maps in homogeneous coordinates. The proof given in the present paper uses the pencil-adapted coordinates.

In the pencil-adapted coordinates $(x,y,\lambda)$ on $\mathbb P^3$, the 3D QRT map restricted to the quadric $\cQ_\lambda$ is a 2D QRT map defined by the pencil of biquadratic curves $\cC_{\lambda,\mu}$ which are intersections of $\cQ_\lambda$ with the quadrics of the second pencil $\cP_\mu$. In other words, for the map $(x,y,\lambda)\mapsto(\widetilde x,\widetilde y,\widetilde \lambda)$ we have $\widetilde\lambda=\lambda$, and therefore, by Proposition \ref{2D QRT inv measure},
$$
\det\frac{\partial(\widetilde x,\widetilde y,\widetilde\lambda)}{\partial(x,y,\lambda)}=\frac{C_{\lambda,\infty}(\widetilde x,\widetilde y)}{C_{\lambda,\infty}(x,y)}.
$$
As we are interested in the expression of the density of the invariant measure in affine coordinates, say
$$
(x_1,x_2,x_3)=\Big(\frac{X_1}{X_4},\frac{X_2}{X_4},\frac{X_3}{X_4}\Big),
$$
we get:
$$
\det\frac{\partial(\widetilde x_1,\widetilde x_2,\widetilde x_3)}{\partial(x_1,x_2,x_3)}=
\det\frac{\partial(\widetilde x_1,\widetilde x_2,\widetilde x_3)}{\partial(\widetilde x,\widetilde y,\widetilde\lambda)}\cdot
\frac{C_{\lambda,\infty}(\widetilde x,\widetilde y)}{C_{\lambda,\infty}(x,y)}\cdot\bigg( \det\frac{\partial(x_1,x_2,x_3)}{\partial(x,y,\lambda)}\bigg)^{-1}.
$$
Thus, there exists an invariant measure with the density given by
\begin{equation}\label{rho in adapted coords}
\rho(x_1,x_2,x_3)=C_{\lambda,\infty}(x,y) \cdot \det\frac{\partial(x_1,x_2,x_3)}{\partial(x,y,\lambda)},
\end{equation}
and it remains to express the right-hand side in terms of $(x_1,x_2,x_3)$. 

For the first factor, we can use the following general formula:
\begin{equation}\label{C lambda mu}
C_{\lambda,\mu}(x,y)=P_\mu(\phi_\lambda(x,y)).
\end{equation}
For the second factor, we will prove:

\begin{proposition}\label{prop Q infty}
We have the following evaluation:
\begin{equation}\label{eq Q infty}
\det\frac{\partial(x_1,x_2,x_3)}{\partial(x,y,\lambda)}\simeq X_4^{-4} Q_\infty(\phi_\lambda(x,y)).
\end{equation}
Here, the symbol $\simeq$ means ``up to a constant factor, depending on $\lambda$ only''.
\end{proposition}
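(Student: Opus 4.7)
The plan is to reduce the three-dimensional Jacobian on the left-hand side of \eqref{eq Q infty} to the determinant of a $4\times 4$ matrix of partial derivatives of $\phi_\lambda$, and then evaluate that determinant by means of a ``4D cross product'' argument that uses only the identity $Q_\lambda(\phi_\lambda(x,y))\equiv 0$ together with the normalization $A_\lambda^T M_\lambda A_\lambda=M_0$. The first and second steps are fairly routine; the main obstacle, and the place where the qualifier $\simeq$ (rather than $=$) matters, is the last step: verifying that the constant of proportionality produced by the cross product argument is indeed a function of $\lambda$ alone.

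I would first form the $4\times 4$ matrix with columns $\partial_x\phi_\lambda,\partial_y\phi_\lambda,\partial_\lambda\phi_\lambda,\phi_\lambda$. Subtracting $(\phi_{\lambda,i}/\phi_{\lambda,4})$ times the fourth row from row $i$ for $i=1,2,3$ clears the fourth column in those rows and leaves in columns $a\in\{x,y,\lambda\}$ precisely $X_4\cdot\partial x_i/\partial a$, where $X_4=\phi_{\lambda,4}$. Expanding along the fourth column,
\begin{equation*}
\det\bigl[\partial_x\phi_\lambda,\,\partial_y\phi_\lambda,\,\partial_\lambda\phi_\lambda,\,\phi_\lambda\bigr]=X_4^{4}\cdot\det\frac{\partial(x_1,x_2,x_3)}{\partial(x,y,\lambda)},
\end{equation*}
so it suffices to prove $\det\bigl[\partial_x\phi_\lambda,\partial_y\phi_\lambda,\partial_\lambda\phi_\lambda,\phi_\lambda\bigr]\simeq Q_\infty(\phi_\lambda)$.

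Next, I would differentiate $Q_\lambda(\phi_\lambda)\equiv 0$ in $x$, $y$, and $\lambda$. Using $\nabla_X Q_\lambda=2M_\lambda X$ and $\partial Q_\lambda/\partial\lambda=-Q_\infty$, this yields
\begin{equation*}
(M_\lambda\phi_\lambda)^T\partial_x\phi_\lambda=(M_\lambda\phi_\lambda)^T\partial_y\phi_\lambda=(M_\lambda\phi_\lambda)^T\phi_\lambda=0,\qquad (M_\lambda\phi_\lambda)^T\partial_\lambda\phi_\lambda=\tfrac12 Q_\infty(\phi_\lambda)
\end{equation*}
(the third vanishing is $2Q_\lambda(\phi_\lambda)=0$). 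Defining $N$ to be the ``4D cross product'' of $\partial_x\phi_\lambda,\partial_y\phi_\lambda,\phi_\lambda$, i.e.\ the unique vector satisfying $N^T z=\det[\partial_x\phi_\lambda,\partial_y\phi_\lambda,\phi_\lambda,z]$ for all $z$, one sees that $N$ is orthogonal to those same three vectors; hence for $\lambda\notin{\rm Sing}(\cQ)$, where these are generically independent, $N=c(x,y,\lambda)\cdot M_\lambda\phi_\lambda$ for a scalar $c$. Swapping the last two columns then gives
\begin{equation*}
\det\bigl[\partial_x\phi_\lambda,\,\partial_y\phi_\lambda,\,\partial_\lambda\phi_\lambda,\,\phi_\lambda\bigr]=-N^T\partial_\lambda\phi_\lambda=-\tfrac12\,c(x,y,\lambda)\cdot Q_\infty(\phi_\lambda).
\end{equation*}

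Finally, I would determine $c$ using the explicit structure $\phi_\lambda=A_\lambda v$ with $v=(x,y,xy,1)^T$ and $\partial_x\phi_\lambda=A_\lambda v_x$, $\partial_y\phi_\lambda=A_\lambda v_y$ for $v_x=(1,0,y,0)^T$, $v_y=(0,1,x,0)^T$. Factoring $A_\lambda$ out of the first three columns,
\begin{equation*}
\det[\partial_x\phi_\lambda,\partial_y\phi_\lambda,\phi_\lambda,z]=\det(A_\lambda)\cdot\det[v_x,v_y,v,A_\lambda^{-1}z].
\end{equation*}
A direct expansion gives $\det[v_x,v_y,v,w]=yw_1+xw_2-w_3-xyw_4=w^T M_0 v$, the critical observation being the coincidence $(y,x,-1,-xy)^T=M_0 v$, which is immediate from the standard form of $M_0$. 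Combined with $A_\lambda^T M_\lambda A_\lambda=M_0$, equivalently $A_\lambda^{-T}M_0=M_\lambda A_\lambda$, this yields $N=\det(A_\lambda)\cdot M_\lambda\phi_\lambda$, so $c(x,y,\lambda)=\det(A_\lambda)$ depends on $\lambda$ only and the constant of proportionality in \eqref{eq Q infty} is $-\tfrac12\det(A_\lambda)$. In the branched case the same calculation goes through with $A_\nu$ in place of $A_\lambda$, the extra chain-rule factor $d\lambda/d\nu$ being absorbed into the $\nu$-dependent constant, so the statement is unchanged.
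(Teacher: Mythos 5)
Your proof is correct, and its computational skeleton coincides with the paper's: the same reduction of the affine Jacobian to the homogeneous determinant $X_4^{-4}\det[\partial_x X,\partial_y X,\partial_\lambda X,X]$, the same factorization through $A_\lambda$, and the same key expansion $\det[\partial_x Y,\partial_y Y,Y,w]=w^{\rm T}M_0Y$ for $Y=(x,y,xy,1)^{\rm T}$ (the coincidence $(y,x,-1,-xy)^{\rm T}=M_0Y$). Where you genuinely differ is the end-game. The paper substitutes $w=A_\lambda^{-1}\partial_\lambda(A_\lambda)Y$, reads the result as the quadratic form $X^{\rm T}M_\lambda\partial_\lambda(A_\lambda)A_\lambda^{-1}X$, symmetrizes it, and identifies the symmetrized matrix with $-\partial_\lambda M_\lambda=M_\infty$ by differentiating $M_\lambda=A_\lambda^{-\rm T}M_0A_\lambda^{-1}$. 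You instead keep $w$ arbitrary, obtain $N=\det(A_\lambda)\,M_\lambda\phi_\lambda$ from $A_\lambda^{-\rm T}M_0=M_\lambda A_\lambda$, and pair $N$ with $\partial_\lambda\phi_\lambda$ via the $\lambda$-derivative of the identity $Q_\lambda(\phi_\lambda)\equiv 0$, using $\partial_\lambda M_\lambda=-M_\infty$. This avoids the symmetrization step and makes it transparent why the proportionality factor depends on $\lambda$ alone; note, though, that your intermediate orthogonality argument ($N=c(x,y,\lambda)\,M_\lambda\phi_\lambda$ by dimension count) is logically redundant, since you then compute $N$ exactly anyway. Two harmless quibbles: your factor $\tfrac12$ mixes the conventions $Q=X^{\rm T}MX$ and $Q=\tfrac12 X^{\rm T}MX$ (with the paper's $M_0$ the constant comes out as $-\det(A_\lambda)$, consistent with the examples, where it equals $-1$), but this is absorbed into $\simeq$; and in the branched case the constant is $-\det(A_\nu)\,d\lambda/d\nu$, again a function of $\nu$ only, as you indicate.
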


Formulas \eqref{rho in adapted coords}, \eqref{C lambda mu} and \eqref{eq Q infty} immediately imply \eqref{rho in aff coord}. \qed

\section{Examples}
\label{sect examples}

Before giving a general proof of Proposition \ref{prop Q infty}, we illustrate this result by the corresponding evaluations in our two examples.
\smallskip

\emph{Example 1: pencil of quadrics through a twisted cubic and one of its tangent lines.}

\noindent
We have the following parametrization of $\cQ_\lambda$ by the pencil-adapted coordinates in the affine patch $X_4=1$: 
\begin{equation}
\phi_\lambda(x,y)= \begin{pmatrix} x_1 \\ x_2\\ x_3 \\ 1\end{pmatrix}
 =\begin{pmatrix} x \\ y+\lambda x\\  xy+\lambda y+\lambda^2 x \\ 1\end{pmatrix}
 \end{equation}
 We compute:
  \begin{eqnarray*}
 \det\frac{\partial(x_1,x_2,x_3)}{\partial(x,y,\lambda)} & = & 
 \left|\begin{array}{ccc}
 1 & 0 & 0 \\
 \lambda & 1 & x \\
 y+\lambda^2 & x+\lambda & y+2\lambda x
 \end{array}\right| \\[0.2cm]
 & = & y+\lambda x-x^2=x_2-x_1^2=-Q_\infty(x_1,x_2,x_3,1).
\end{eqnarray*}

\emph{Example 2: pencil of quadrics through a cuspidal quartic curve.}

\noindent
We have the following parametrization of $\cQ_{\lambda(\nu)}$ by the pencil-adapted coordinates in the affine patch $X_4=1$: 
\begin{equation}
\phi_{\nu}(x,y)= \begin{pmatrix} x_1 \\ x_2\\ x_3 \\ 1 \end{pmatrix}
 =\begin{pmatrix} \frac{1}{2\nu}\big((\nu+1)x+(\nu-1)y\big) \\[0.2cm] \frac{1}{2\nu}\big((\nu-1)x+(\nu+1)y\big)\\[0.2cm] 
  xy+\frac{\nu^2-1}{2}(x+y) \\[0.2cm]  1 \end{pmatrix}.
 \end{equation}
 We compute:
 $$
 \det\frac{\partial(x_1,x_2,x_3)}{\partial(x,y,\nu)}=
 \left|\begin{array}{ccc}
 \frac{\nu+1}{2\nu} &  \frac{\nu-1}{2\nu} & -\frac{1}{2\nu^2}(x-y) \\[0.2cm]
 \frac{\nu-1}{2\nu} &  \frac{\nu+1}{2\nu} & \frac{1}{2\nu^2}(x-y) \\[0.2cm]
y+\frac{\nu^2-1}{2} & x+\frac{\nu^2-1}{2} & \nu(x+y)
 \end{array}\right|.
 $$
 A straightforward computation leads to
 \begin{eqnarray*}
 \det\frac{\partial(x_1,x_2,x_3)}{\partial(x,y,\nu)} & = &(x+y)-\frac{1}{2\nu^2}(x-y)^2\\
& = & (x_1+x_2)-\frac{1}{2}(x_1-x_2)^2\;=\; -Q_\infty(x_1,x_2,x_3,1).
\end{eqnarray*}

\section{Proof of Proposition \ref{prop Q infty}}
\label{sect proof prop 3}

To prove \eqref{eq Q infty}, we start with representing the left-hand side in a more symmetric form in homogeneous coordinates. Computations are very similar to those used in the proof of Proposition \ref{prop Jac in dim 3,4}. We have:
\begin{eqnarray*}
\det\frac{\partial(x_1,x_2,x_3)}{\partial(x,y,\lambda)} & = & 
\left|\begin{array}{ccc} 
\partial_x\Big(\frac{X_1}{X_4}\Big) & \partial_y\Big(\frac{X_1}{X_4}\Big) & \partial_\lambda\Big(\frac{X_1}{X_4}\Big) \\[0.2cm]
\partial_x\Big(\frac{X_2}{X_4}\Big) & \partial_y\Big(\frac{X_2}{X_4}\Big) & \partial_\lambda\Big(\frac{X_2}{X_4}\Big) \\[0.2cm]
\partial_x\Big(\frac{X_3}{X_4}\Big) & \partial_y\Big(\frac{X_3}{X_4}\Big) & \partial_\lambda\Big(\frac{X_3}{X_4}\Big) 
\end{array}\right|\\
& = & 
\left|\begin{array}{ccc} 
\frac{\partial_x X_1}{X_4}-X_1\frac{\partial_x X_4}{X_4^2} & \frac{\partial_y X_1}{X_4}-X_1\frac{\partial_y X_4}{X_4^2} & \frac{\partial_\lambda X_1}{X_4}- X_1\frac{\partial_\lambda X_4}{X_4^2}  
\\[0.2cm]
\frac{\partial_x X_2}{X_4}- X_2\frac{\partial_x X_4}{X_4^2} & \frac{\partial_y X_2}{X_4}- X_2\frac{\partial_y X_4}{X_4^2} & \frac{\partial_\lambda X_2}{X_4}- X_2\frac{\partial_\lambda X_4}{X_4^2} 
\\[0.2cm]
\frac{\partial_x X_3}{X_4}- X_3\frac{\partial_x X_4}{X_4^2} & \frac{\partial_y X_3}{X_4}- X_3\frac{\partial_y X_4}{X_4^2} & \frac{\partial_\lambda X_3}{X_4}- X_3\frac{\partial_\lambda X_4}{X_4^2} \end{array}\right|.
\end{eqnarray*}
This can be represented as
\begin{eqnarray*}
\det\frac{\partial(x_1,x_2,x_3)}{\partial(x,y,\lambda)} & = & \frac{1}{X_4}
\left|\begin{array}{cccc} 
\frac{\partial_x X_1}{X_4}-X_1\frac{\partial_x X_4}{X_4^2} & \frac{\partial_y X_1}{X_4}-X_1\frac{\partial_y X_4}{X_4^2} & \frac{\partial_\lambda X_1}{X_4}- X_1\frac{\partial_\lambda X_4}{X_4^2} & X_1 
\\[0.2cm]
\frac{\partial_x X_2}{X_4}- X_2\frac{\partial_x X_4}{X_4^2} & \frac{\partial_y X_2}{X_4}- X_2\frac{\partial_y X_4}{X_4^2} & \frac{\partial_\lambda X_2}{X_4}- X_2\frac{\partial_\lambda X_4}{X_4^2} & X_2 
\\[0.2cm]
\frac{\partial_x X_3}{X_4}- X_3\frac{\partial_x X_4}{X_4^2} & \frac{\partial_y X_3}{X_4}- X_3\frac{\partial_y X_4}{X_4^2} & \frac{\partial_\lambda X_3}{X_4}- X_3\frac{\partial_\lambda X_4}{X_4^2} & X_3
\\[0.2cm]
0 & 0 & 0 & X_4
\end{array}\right| \\
& = & \frac{1}{X_4^4}
\left|\begin{array}{cccc} 
\partial_x X_1 & \partial_y X_1 & \partial_\lambda X_1 & X_1 \\
\partial_x X_2 & \partial_y X_2 & \partial_\lambda X_2 & X_2 \\
\partial_x X_3 & \partial_y X_3 & \partial_\lambda X_3 & X_3 \\
\partial_x X_4 & \partial_y X_4 & \partial_\lambda X_4 & X_4
\end{array}\right|.
\end{eqnarray*}
Now formula \eqref{eq Q infty} reduces to  the following evaluation: for $X=\phi_\lambda(x,y)$, we have
\begin{equation}\label{eq Q infty hom}
\left|\begin{array}{cccc} 
\partial_x X_1 & \partial_y X_1 & \partial_\lambda X_1 & X_1 \\
\partial_x X_2 & \partial_y X_2 & \partial_\lambda X_2 & X_2 \\
\partial_x X_3 & \partial_y X_3 & \partial_\lambda X_3 & X_3 \\
\partial_x X_4 & \partial_y X_4 & \partial_\lambda X_4 & X_4
\end{array}\right| \simeq Q_\infty(\phi_\lambda(x,y)).
\end{equation}

\begin{proof}
Setting $Y=(x,y,xy,1)^{\rm T}$, so that $X=A_\lambda Y$, we have:
$$
\det(\partial_x X, \partial_y X, X, \partial_\lambda X)=\det(A_\lambda)\cdot\det(\partial_x Y,\partial_y Y, Y, A_\lambda^{-1}\partial_\lambda(A_\lambda) Y)
$$
Here, $\det(A_\lambda)$ is the scalar factor hidden in \eqref{eq Q infty hom} behind the symbol $\simeq$, and it will be ignored for the rest of this proof. Denoting for a moment $Z=A_\lambda^{-1}\partial_\lambda(A_\lambda) Y$, we compute:
$$
\det(\partial_x X, \partial_y X, X, \partial_\lambda X)\simeq 
\left|\begin{array}{cccc} 1 & 0 & x & Z_1 \\ 0 & 1 & y & Z_2 \\ y & x & xy & Z_3 \\ 0 & 0 & 1 & Z_4 \end{array}\right|.
$$
Developing the determinant with respect to the last column, we find:
$$
\det(\partial_x X, \partial_y X, X, \partial_\lambda X)\simeq 
y Z_1+x Z_2-Z_3-xyZ_4=Y^{\rm T}\begin{pmatrix} 0 & 1 & 0 & 0 \\ 1 & 0 & 0 & 0 \\ 0 & 0 & 0 & -1 \\ 0 & 0 & -1 & 0 \end{pmatrix} Z=Y^{\rm T}M_0 Z.
$$
Upon substituting the value $Z=A_\lambda^{-1}\partial_\lambda(A_\lambda) Y$, we find:
\begin{eqnarray*}
\det(\partial_x X, \partial_y X, X, \partial_\lambda X) & \simeq &  
Y^{\rm T} M_0 A_\lambda^{-1}\partial_\lambda(A_\lambda) Y\\
 & = & X^{\rm T} A_\lambda^{-\rm T}M_0A_\lambda^{-1}\partial_\lambda(A_\lambda) A_\lambda^{-1}X\\
 & = & X^{\rm T} M_\lambda\partial_\lambda(A_\lambda) A_\lambda^{-1}X.
\end{eqnarray*}
This is a quadratic form in $X$. Symmetrizing its matrix, we have (up to the factor 1/2, which can be still hidden behind $\simeq$):
\begin{equation}\label{aux}
\det(\partial_x X, \partial_y X, X, \partial_\lambda X) \simeq 
X^{\rm T} \Big(M_\lambda\partial_\lambda(A_\lambda) A_\lambda^{-1}+A_\lambda^{-\rm T}\partial_\lambda(A_\lambda^{\rm T})M_\lambda\Big)X.
\end{equation}
Now, differentiating 
$$
M_\lambda=A_\lambda^{-\rm T}M_0A_\lambda^{-1}
$$
with respect to $\lambda$, we find:
\begin{eqnarray*}
\partial_\lambda(M_\lambda) & = & -A_\lambda^{-\rm T}\partial_\lambda(A_\lambda^{\rm T})A_\lambda^{-\rm T}M_0A_\lambda^{-1}-
A_\lambda^{-\rm T}M_0A_\lambda^{-1}\partial_\lambda(A_\lambda)A_\lambda^{-1}\\
& = & -A_\lambda^{-\rm T}\partial_\lambda(A_\lambda^{\rm T})M_\lambda-
M_\lambda\partial_\lambda(A_\lambda)A_\lambda^{-1}.
\end{eqnarray*}
Comparing this with \eqref{aux}, we finally arrive at: 
\begin{equation}
\det(\partial_x X, \partial_y X, X, \partial_\lambda X) \simeq 
-X^{\rm T} \partial_\lambda(M_\lambda) X=X^{\rm T} M_\infty X=Q_\infty(X).
\end{equation}
This proves Proposition \ref{prop Q infty}, and with it also Theorem \ref{theorem main}.
\end{proof}

\section{Conclusions}
\label{sect conclusions}

With finding an invariant volume form for 3D QRT maps, the latter acquire the last attribute necessary to qualify as integrable systems by a traditional definition of integrability. This will be useful for the novel theory of discrete Painlev\'e equations which builds upon 3D QRT maps \cite{ASW1, AS2}. On the other hand, this result gives a unified approach to integrability of some old and new Kahan-Hirota-Kimura discretizations of 3D systems like Euler top and Zhukovsky-Volterra gyrostat, see \cite{PS, PPS}, since these discretizations have been identified as 3D QRT maps in \cite{ASW}.


\appendix
\section{Appendix: Proof of the invariant area form for 2D QRT maps}
Proposition \ref{2D QRT inv measure} follows from the anti-preservation of the area form $dx\wedge dy/C_\infty(x,y)$  by the both involutions $i_1$, $i_2$, that is,
\begin{equation}\label{App aux}
\det d i_1=\det\frac{\partial (x_0,y_1)}{\partial (x_0,y_0)}=-\frac{C_\infty(x_0,y_1)}{C_\infty(x_0,y_0)},
\end{equation}
and similarly for $i_2$. Here, $y_1$ is determined from $C_\mu(x_0,y_1)=C_\mu(x_0,y_0)$, where $\mu=\mu(x_0,y_0)$ is the unique value of the pencil parameter $\mu$ for which  $C_\mu(x_0,y_0)=0$. We write:
\begin{eqnarray*}
C_\mu(x,y) & = & c_0x^2y^2 + c_1x^2y + c_2xy^2 + c_3x^2 + c_4xy +c_5y^2 + c_6x + c_7y + c_8\\
 & = & \begin{pmatrix} x^2 & x & 1 \end{pmatrix} \begin{pmatrix} c_0 & c_1 & c_3 \\ c_2 & c_4 & c_6 \\ c_5 & c_7 & c_8 \end{pmatrix} \begin{pmatrix} y^2 \\ y \\ 1 \end{pmatrix} \; = \;\begin{pmatrix} x^2 & x & 1 \end{pmatrix} C_\mu\begin{pmatrix} y^2 \\ y \\ 1 \end{pmatrix}.
\end{eqnarray*}
We compute:
\begin{equation}\label{App aux4}
\det\frac{\partial (x_0,y_1)}{\partial (x_0,y_0)}=\frac{\partial y_1}{\partial y_0}.
\end{equation}
To compute the latter derivative, we differentiate $C_\mu(x_0,y_1)=0$ with respect to $y_0$, to obtain
$$
\begin{pmatrix} x_0^2 & x_0 & 1 \end{pmatrix} C_\mu\begin{pmatrix} 2y_1 \\ 1 \\ 0 \end{pmatrix}\frac{\partial y_1}{\partial y_0}+
\begin{pmatrix} x_0^2 & x_0 & 1 \end{pmatrix} \frac{\partial C_\mu}{\partial \mu}\begin{pmatrix} y_1^2 \\ y_1 \\ 1 \end{pmatrix}\frac{\partial \mu}{\partial y_0}=0,
$$
and thus
\begin{equation}\label{App aux5}
\frac{\partial y_1}{\partial y_0}=-\frac{\begin{pmatrix} x_0^2 & x_0 & 1 \end{pmatrix} \dfrac{\partial C_\mu}{\partial \mu}\begin{pmatrix} y_1^2 \\ y_1 \\ 1 \end{pmatrix}}{\begin{pmatrix} x_0^2 & x_0 & 1 \end{pmatrix} C_\mu\begin{pmatrix} 2y_1 \\ 1 \\ 0 \end{pmatrix}}\cdot\frac{\partial \mu}{\partial y_0}
\end{equation}
On the other hand, by differentiating $C_\mu(x_0,y_0)=0$ with respect to $y_0$, we obtain
$$
\begin{pmatrix} x_0^2 & x_0 & 1 \end{pmatrix} C_\mu\begin{pmatrix} 2y_0 \\ 1 \\ 0 \end{pmatrix}+
\begin{pmatrix} x_0^2 & x_0 & 1 \end{pmatrix} \frac{\partial C_\mu}{\partial \mu}\begin{pmatrix} y_0^2 \\ y_0 \\ 1 \end{pmatrix}\frac{\partial \mu}{\partial y_0}=0,
$$
and thus
\begin{equation}\label{App aux6}
\frac{\partial \mu}{\partial y_0}=-\frac{\begin{pmatrix} x_0^2 & x_0 & 1 \end{pmatrix} C_\mu\begin{pmatrix} 2y_0 \\ 1 \\ 0 \end{pmatrix}}{\begin{pmatrix} x_0^2 & x_0 & 1 \end{pmatrix} \dfrac{\partial C_\mu}{\partial \mu}\begin{pmatrix} y_0^2 \\ y_0 \\ 1 \end{pmatrix}}.
\end{equation}
Collecting everything from \eqref{App aux4}--\eqref{App aux6} together, we arrive at
\begin{equation}\label{App aux1}
\det\frac{\partial (x_0,y_1)}{\partial (x_0,y_0)}=\frac{\begin{pmatrix} x_0^2 & x_0 & 1 \end{pmatrix} \dfrac{\partial C_\mu}{\partial \mu}\begin{pmatrix} y_1^2 \\ y_1 \\ 1 \end{pmatrix}}{\begin{pmatrix} x_0^2 & x_0 & 1 \end{pmatrix} C_\mu\begin{pmatrix} 2y_1 \\ 1 \\ 0 \end{pmatrix}}\cdot \frac{\begin{pmatrix} x_0^2 & x_0 & 1 \end{pmatrix} C_\mu\begin{pmatrix} 2y_0 \\ 1 \\ 0 \end{pmatrix}}{\begin{pmatrix} x_0^2 & x_0 & 1 \end{pmatrix} \dfrac{\partial C_\mu}{\partial \mu}\begin{pmatrix} y_0^2 \\ y_0 \\ 1 \end{pmatrix}}.
\end{equation}
Now the key observation is the following : condition $C_\mu(x_0,y_1)=C_\mu(x_0,y_0)$ can be equivalently re-written as
$$
c_0x_0^2(y_0 + y_1) + c_1x_0^2 + c_2x_0(y_0 + y_1) + c_5(y_0 + y_1) + c_4 x_0 + c_7=0,
$$
or
$$
\begin{pmatrix} x_0^2 & x_0 & 1 \end{pmatrix} C_\mu\begin{pmatrix} y_0+y_1 \\ 1 \\  0\end{pmatrix}=0,
$$
which is equivalent to
\begin{equation}\label{App aux2}
\begin{pmatrix} x_0^2 & x_0 & 1 \end{pmatrix} C_\mu\begin{pmatrix} 2y_0 \\ 1 \\  0\end{pmatrix}=-\begin{pmatrix} x_0^2 & x_0 & 1 \end{pmatrix} C_\mu\begin{pmatrix} 2y_1 \\ 1 \\  0\end{pmatrix}.
\end{equation}
Upon using this in \eqref{App aux1}, we finally arrive at
\begin{equation}\label{App aux3}
\det\frac{\partial (x_0,y_1)}{\partial (x_0,y_0)}=-\frac{\begin{pmatrix} x_0^2 & x_0 & 1 \end{pmatrix} \dfrac{\partial C_\mu}{\partial \mu}\begin{pmatrix} y_1^2 \\ y_1 \\ 1 \end{pmatrix}}{\begin{pmatrix} x_0^2 & x_0 & 1 \end{pmatrix} \dfrac{\partial C_\mu}{\partial \mu}\begin{pmatrix} y_0^2 \\ y_0 \\ 1 \end{pmatrix}},
\end{equation}
which coincides with \eqref{App aux}, since $\partial C_\mu/\partial \mu=-C_\infty$. \qed



\begin{thebibliography}{10}


\bibitem{ASW}  J. Alonso, Yu. B. Suris, K. Wei. \emph{A three-dimensional generalization of QRT maps.} J. Nonlinear Sci. {\bf 33} (2023), 117.

\bibitem{ASW1}  J. Alonso, Yu. B. Suris, K. Wei. \emph{Discrete Painlev\'e equations and pencils of quadrics in $\mathbb P^3$}.  {\tt  arXiv:2403.11349 [nlin.SI]}.

\bibitem{AS2}  J. Alonso, Yu. B. Suris. \emph{Discrete Painlev\'e equations from pencils of quadrics in $\mathbb P^3$ with branching generators}.  {\tt arXiv:2506.02275 [math-ph]}.

\bibitem{BHQ} G.B. Byrnes, F.A. Haggar, G.R.W. Quispel. \emph{Sufficient conditions for dynamical systems to
have pre-symplectic or pre-implectic structures}. Physica A {\bf 272} (1999) 99--129.

\bibitem{CA} E. Casas-Alvero. \emph{Analytic projective geometry.}
EMS Textbooks in Mathematics. Z\"urich: European Mathematical Society (2014). 

\bibitem{H} J. Han. \emph{Invariant volume forms for integrable 3D maps}. M.Sc. Thesis, Inst. of Mathematics, Technical Univ. of Berlin, 2025.

\bibitem{IR} A. Iatrou, J.A.G. Roberts. \emph{Integrable mappings of the plane preserving biquadratic invarianz curves II}. Nonlineraity {\bf 15} (2002) 459--489.

\bibitem{QRT book} J.J. Duistermaat. \emph{Discrete Integrable Systems. QRT Maps and Elliptic Surfaces.} Springer Monographs in Mathematics. Springer, New York (2010).

\bibitem{OSG} B. Odehnal, H. Stachel, G. Glaeser. \emph{The Universe of Quadrics}, Springer, 2020, viii+606 pp.

\bibitem{PPS} M. Petrera, A. Pfadler, Yu.B. Suris. \emph{On integrability of Hirota-Kimura type discretizations}. Regular Chaotic Dyn. {\bf 16} (2011), No. 3-4, 245--289.

\bibitem{PS} M. Petrera, Yu.B. Suris. \emph{On the Hamiltonian structure of Hirota-Kimura discretization of the Euler top.} Math. Nachr. {\bf 283} (2010), No. 11, 1654--1663.

\bibitem{QRT1} G.R.W. Quispel, J.A.G. Roberts, C.J.Thompson. \emph{Integrable mappings and soliton equations.} Phys. Lett. A {\bf126}, no. 7 (1988) 419--421. 

\bibitem{QRT2} G.R.W. Quispel, J.A.G. Roberts, C.J.Thompson. \emph{Integrable mappings and soliton equations II.}  Phys. D {\bf 34}, no. 1--2 (1989) 183--192.

\bibitem{R} J.A.G. Roberts. \emph{Order and chaos in reversible dynamical systems}. PhD Thesis Math. Dept,
Univ. of Melbourne, 1990.


\end{thebibliography}
\end{document}